\let\proof\@undefined
\let\endproof\@undefined
\newtheorem{proposition}{Proposition}
\newtheorem{assumption}{Assumption}
\newtheorem{lemma}{Lemma}
\newtheorem{theorem}{Theorem}
\newcolumntype{M}[1]{>{\centering\arraybackslash}m{#1}}
\title{\LARGE \bf
Wasserstein Distributionally Robust Control of Partially Observable Linear Systems: Tractable Approximation and Performance Guarantee
\thanks{This work was supported in part by  the National Research Foundation of Korea funded by  MSIT(2020R1C1C1009766), the Information and Communications Technology Planning and Evaluation (IITP) grant funded by MSIT(2020-0-00857), and Samsung Electronics.}
}
\author{Astghik Hakobyan \and Insoon Yang
\thanks{A. Hakobyan, and I. Yang are with the Department of Electrical and Computer Engineering and ASRI, Seoul National University, Seoul, 08826, South Korea {\tt\small \{astghikhakobyan, insoonyang\}@snu.ac.kr}}%
}
\begin{document}

\maketitle
\thispagestyle{empty}
\pagestyle{empty}


\begin{abstract}
Wasserstein distributionally robust control (WDRC)  is an effective method for addressing inaccurate distribution information about disturbances in stochastic systems. It provides various salient features, such as an out-of-sample performance guarantee, while most of the existing methods use full-state observations. 
In this paper, we develop a computationally tractable WDRC method for discrete-time partially observable linear-quadratic (LQ) control problems.
The key idea is to reformulate the WDRC problem as a novel minimax control problem with an approximate Wasserstein penalty. 
We derive a closed-form expression of the optimal control policy of the approximate problem using  a nontrivial Riccati equation. 
We further show the guaranteed cost property of the resulting controller and identify a provable bound for the optimality gap. 
Finally, we evaluate the performance of our method through numerical experiments using both Gaussian and non-Gaussian disturbances.
\end{abstract}


\section{Introduction}\label{sec:intro}

The problem of finding an optimal control policy for a partially observable dynamical system under uncertainties, such as  disturbances, is an important and well-studied topic, for which numerous results have been obtained.
Classically, the optimal control of partially observable systems under uncertainties is regarded either in stochastic or in robust control frameworks.  
Stochastic optimal control methods seek the control policies that minimize an expected cost of interest under the disturbance distribution known \emph{a priori}~\cite{aastrom2012introduction}. 
The most well-known method of this type with partial state observations is the linear-quadratic-Gaussian (LQG) control. 
For linear systems, it is common practice to use the separation principle to independently design $(i)$ a feedback law as if the full state observation is available
and $(ii)$ a state estimator~\cite{witsenhausen1971separation}.
However, LQG control assumes that the disturbance follows 
a Gaussian noise model with a known mean and covariance matrix and thus ignores possible inaccuracies in the distribution information. 
By contrast, robust control methods aim to design a controller that minimizes the worst-case cost assuming the disturbances lie in known ranges~\cite{Doyle1996}. 
 Thus, these methods disregard the potentially useful statistical properties of disturbances.

Distributionally robust control (DRC) is
an alternative method that bridges the gap between the two aforementioned  approaches~(e.g., \cite{petersen2000minimax, Ugrinovskii2002, 
van2015distributionally,
Tzortzis2015, 
 coulson2019regularized,  Schuurmans2020, Mark2020, yang2020wasserstein, Tzortzis2021, zolanvari2022data, Dixit2022}). 
The aim is to obtain a controller that minimizes the expected value of a given cost function with respect to the worst-case distribution chosen from an \emph{ambiguity set}. 
In the literature on distributionally robust optimization (DRO), it is common to construct the ambiguity set based on a nominal distribution estimated from observations so that it contains the true distribution with high probability~(see~\cite{Rahimian2019} and the references therein).
Among various types, 
the ambiguity set using Wasserstein distance has recently received great attention as a tool for hedging against data perturbations and distributional uncertainties. 
The Wasserstein ambiguity set  is a statistical ball centered at a nominal distribution with a radius measured by the Wasserstein metric. 
The key benefit is its ability to avoid pathological solutions to DRO problems, unlike the $\phi$-divergences~\cite{gao2022distributionally}. 
 Moreover, recent works on DRO using Wasserstein ambiguity sets demonstrate  computational tractability, as well as an out-of-sample performance  guarantee of the resulting solution~\cite{mohajerin2018data, gao2022distributionally, kuhn2019wasserstein}.
 Inspired by the success of the Wasserstein DRO, 
a few DRC methods using the Wasserstein ambiguity sets have been proposed~\cite{yang2020wasserstein, coulson2021distributionally, Mark2020, Kim2021, Hakobyan2021}.
However, existing works in Wasserstein DRC (WDRC) mostly assume full-state observations, ignoring the issue of partial observations.\footnote{A notable exception is the work of Coulson {\it et al.}~\cite{coulson2021distributionally} that proposes an MPC method using input and output data.}

In this work, we propose a novel WDRC method 
for discrete-time partially observable linear systems.
The key idea is to approximate the  WDRC problem as a  minimax  control problem with an approximate Wasserstein penalty that replaces the Wasserstein ambiguity set. 
Under an assumption on the penalty parameter, we recursively show that the value functions of the approximate problem have a quadratic form, whose coefficients are obtained via a nontrivial Riccati equation and a tractable semidefinite programming (SDP) problem. 
Moreover, we show that the approximate problem admits a unique optimal control policy, and we derive its closed-form expression (Section~\ref{sec:app}). 
Remarkably, the resulting controller is shown to enjoy a guaranteed cost property under any disturbance distribution chosen from the Wasserstein ambiguity set. 
This demonstrates the distributional robustness of our control method. 
We also identify a provable bound for the performance gap between our controller and the optimal distributionally robust (DR) controller (Section~\ref{sec:perf}). 
Finally, the results of our experiments demonstrate the capability of the proposed method to immunize the system against imperfect distribution information for both Gaussian and non-Gaussian disturbances (Section~\ref{sec:exp}). 


\section{Preliminaries}\label{sec:prel}

\subsection{Notation}

Let $\mathbb{S}^n$ denote the space of all symmetric matrices in $\mathbb{R}^{n\times n}$. We use $\mathbb{S}_+^n$ ($\mathbb{S}_{++}^n$) to represent the cone of symmetric positive semidefinite (positive definite) matrices in $\mathbb{S}^n$. 
For any $A,B\in\mathbb{S}^n$, the relation $A\succeq B$ ($A\succ B$) means that  $A - B\in\mathbb{S}_+^n$ ($A - B\in\mathbb{S}_{++}^n$). 
The set of Borel probability measures with support $\mathcal{W}$ is denoted by $\mathcal{P}(\mathcal{W})$.

\subsection{Problem Setup}
Consider a discrete-time linear stochastic system of the form
\[
\begin{split}
x_{t+1} &= Ax_t + B u_t + w_t\\
y_t & = C_t x_t + v_t,
\end{split}
\]
where $x_t\in\mathbb{R}^{n_x}$, $u_t\in\mathbb{R}^{n_u}$, and $y_t\in\mathbb{R}^{n_y}$ are the system state, input, and output at stage $t$, respectively. Here, $w_t\in\mathbb{R}^{n_x}$ and $v_t\in\mathbb{R}^{n_y}$ are random vectors, representing the system disturbance and the output noise, respectively. The initial state is drawn from some state distribution $f_x(x_0)$. Furthermore, the random vectors $w_t$, $v_t$ and $x_t$ are assumed to be independent. The observation noise $v_t$ is assumed to follow a zero-mean Gaussian distribution with a covariance matrix $M$.

In the partially observable setting, the information collected so far is given by
\[
\begin{split}
I_t &:= (y_0,\dots, y_t, u_0, \dots, u_{t-1}),\quad t=1,\dots,T,\\
I_0 &:= y_0,
\end{split}
\]
where $I_t$ is called the \emph{information vector}. 
After applying the control input to the system and observing the output, the information vector is updated according to
\[
I_{t+1} = (I_t, y_{t+1}, u_t),
\]
which can be viewed as a new dynamical system with state $I_t$.

In practice, it is often restrictive to assume that the probability distribution of $w_t$ is available. 
Let $\mathbb{P}_t$ and $\mathbb{Q}_t$ denote the \emph{unknown} true  and the nominal (or estimated) distribution measures of $w_t$, respectively. 
Our goal is to design a cost-minimizing controller that is robust against the deviation of $\mathbb{P}_t$ from $\mathbb{Q}_t$ in the partially observable case. 
For this, we take a game-theoretic approach.

Consider a two-player zero-sum game, in which Player I is the controller and Player II is a hypothetical adversary. 
Let $\pi_t$ denote the control policy at time $t$, mapping the information vector $I_t$ to a control input $u_t$. 
The policy for the adversary is denoted by $\gamma_t$ and  maps the information vector $I_t$ to a distribution measure $\mathbb{P}_t$ of $w_t$ from an \emph{ambiguity set} $\mathcal{D}_t \subset \mathcal{P}(\mathbb{R}^{n_x})$. 
The ambiguity set (to be defined later) is chosen to contain all the relevant probability measures to appropriately characterize distribution errors. 
To design a finite-horizon controller, the following cost functional is chosen:
\begin{equation}\label{new_cost}
\begin{split}
J(\pi, \gamma) &=  \mathbb{E}_{\bold{y}} \Big[\mathbb{E}_{x_T}[x_T^\top Q_f x_T \mid I_T] \:  \\ 
&
+ \sum_{t=0}^{T-1} \mathbb{E}_{x_t, w_t}[  x_t^\top Q x_t + u_t^\top R u_t \mid I_t, u_t]\Big],
\end{split}
\end{equation}
where $Q\in\mathbb{S}_+^{n_x}, Q_f\in\mathbb{S}_{+}^{n_x}, R\in\mathbb{S}_{++}^{n_x}$ and $T$ is the length of the time-horizon,  and outer expectation is with respect to the joint distribution of all measurements $\bold{y} := (y_0,\dots, y_{T})$.
Player I aims to find a control policy $\pi^{opt}$ to minimize the cost while Player II selects a distribution policy $\gamma^{opt}$ to maximize the same cost. 
The desired control policy can be obtained by solving the following DRC problem:
\begin{equation}\label{drc}
    \min_{\pi\in\Pi}\max_{\gamma\in\Gamma_\mathcal{D}} J(\pi,\gamma),
\end{equation}
where
$\Pi := \{\pi := (\pi_0, \ldots, \pi_{T-1}) \mid \pi_t (I_t) = u_t \}$ and
$\Gamma_{\mathcal{D}} := \{\gamma := (\gamma_0, \ldots, \gamma_{T-1}) \mid \gamma_t (I_t) = \mathbb{P}_t \in \mathcal{D}_t \}$
are the policy spaces for Players I and II, respectively. 
Note that the ambiguity set  is embedded in the policy space for Player II.

\subsection{Measuring Distribution Errors via the Wasserstein Metric}

In the DRO literature, it is popular to construct the ambiguity set  as a statistical ball containing distributions close to the nominal one~\cite{guo2019distributionally, bayraksan2015data, mohajerin2018data, gao2022distributionally, kuhn2019wasserstein}.
Therefore, characterizing the closeness of two probability distributions is the key component in designing the ambiguity set. In this work, we quantify the distance between two probability measures  via the Wasserstein metric. The Wasserstein distance of order $p$ between two measures $\mathbb{P}$ and $\mathbb{Q}$ supported on $\mathcal{W} \subseteq \mathbb{R}^n$ represents the minimum cost of moving the probability mass from one measure to another and is defined as
\[
\begin{split}
W_p(\mathbb{P}, \mathbb{Q}) := \inf_{\kappa \in \mathcal{P}(\mathcal{W}^2)} \bigg\{ \bigg(\int_{\mathcal{W}^2}&\|x-y\|^p \, \mathrm{d}\kappa(x,y) \bigg)^{1/p} \\
&\bigg| \, \Pi^1 \kappa = \mathbb{P}, \Pi^2 \kappa = \mathbb{Q} \bigg\},
\end{split}
\]
where  $\kappa$ is the \emph{transport plan}, with $\Pi^i\kappa$ denoting its $i$th marginal, and $\|\cdot\|$ is a norm on $\mathbb{R}^n$ that quantifies the transportation cost. 

Using the Wasserstein metric of order $p=2$ with $\|\cdot\|$ representing the standard Euclidean distance, we define the ambiguity set as the following statistical ball of radius $\theta>0$:
\[
\mathcal{D}_t: = \{\mathbb{P}_t \in \mathcal{P}(\mathbb{R}^{n_x}) \mid W_2(\mathbb{P}_t,\mathbb{Q}_t) \leq \theta\}.
\]
The ambiguity set contains all probability measures whose Wasserstein distance from the nominal one $\mathbb{Q}_t$ is no greater than $\theta$.\footnote{The radius $\theta$ is an important factor that determines the conservativeness of the resulting control policy. Calibrating $\theta$ is an important research topic, for which theoretical and empirical studies have been performed~(e.g., \cite{mohajerin2018data, Boskos2020, yang2020wasserstein}). }
As mentioned in Section~\ref{sec:intro}, Wasserstein ambiguity sets have superior statistical properties compared to other types and provide  high-performance controllers in various practical DRC problems~(e.g.,~\cite{Mark2021, coulson2021distributionally, hakobyan2021wasserstein, Hakobyan2021map}).

\section{Tractable Approximation and Solution}\label{sec:app}

The WDRC problem~\eqref{drc} is challenging to solve, particularly when the dimension of the state space is large. 
A dynamic programming (DP) approach can be used together with a tractable reformulation of the Bellman equation in a similar fashion to the full-state observation case~\cite{yang2020wasserstein}.
However, the computational complexity of the DP method increases exponentially with the dimension of the state space. 
To develop a scalable solution, we first propose an approximate version of the WDRC problem.  We then derive a Riccati equation for the approximate problem with the corresponding closed-form expression for the optimal policy.

\subsection{Approximation with Wasserstein Penalty}

Instead of using the Wasserstein ambiguity set, 
our approximation employs a Wasserstein penalty to penalize the deviation from the nominal distribution, motivated by our previous work for the full observation case~\cite{Kim2021}. 
Specifically, consider the following modified cost function
with an additional Wasserstein penalty term:
\begin{align}
&\mathbb{E}_{\bold{y}}\Big[\mathbb{E}_{x_T}[x_T^\top Q_f x_T \mid I_T] \: + \label{obj}\\ 
&\sum_{t=0}^{T-1} \Big(\mathbb{E}_{x_t, w_t}[ x_t^\top Q x_t + u_t^\top R u_t \mid I_t, u_t]- \lambda W_2(\mathbb{P}_t,\mathbb{Q}_t)^2\Big)\Big],\nonumber
\end{align}
where $\lambda >0$ is the penalty parameter used to adjust  the conservativeness of the control policy.

Unfortunately, the partially observable minimax control problem with the Wasserstein penalty is intractable unlike the full observation case. 
To resolve this issue, we use the following approximation of the Wasserstein penalty.

\begin{lemma}\label{lem:bound}
Let 
\[
\bar{w}_t := \mathbb{E}_{w_t\sim\mathbb{P}_t} [w_t], \quad \hat{w}_t := \mathbb{E}_{w_t\sim\mathbb{Q}_t} [w_t]
\]
denote the mean vectors of $w_t$ with respect to $\mathbb{P}_t$ and $\mathbb{Q}_t$, respectively.
Similarly, let
\begin{equation}\nonumber
\begin{split}
\Sigma_t &:= \mathbb{E}_{w_t\sim\mathbb{P}_t} [(w_t - \bar{w}_t)(w_t - \bar{w}_t)^\top],\\
 \hat{\Sigma}_t &:= \mathbb{E}_{w_t\sim\mathbb{Q}_t} [(w_t - \hat{w}_t)(w_t - \hat{w}_t)^\top ].
 \end{split}
\end{equation}
denote the covariance matrices of $w_t$ with respect to $\mathbb{P}_t$ and $\mathbb{Q}_t$, respectively.
 Then, the 2-Wasserstein distance between $\mathbb{P}_t$ and $\mathbb{Q}_t$ is bounded by
\[
W_2(\mathbb{P}_t, \mathbb{Q}_t) \geq \sqrt{\|\bar{w}_t - \hat{w}_t\|_2^2 + \mathrm{B}^2(\Sigma_t, \hat{\Sigma}_t)},
\]
where
\[
\mathrm{B}^2(\Sigma_t, \hat{\Sigma}_t) := \mathrm{Tr}[\Sigma_t + \hat{\Sigma}_t - 2(\Sigma^{1/2}_t \hat{\Sigma}_t \Sigma^{1/2}_t)^{1/2}].
\]
Moreover, the bound is exact if $\mathbb{P}_t$ and $\mathbb{Q}_t$ are elliptical distributions with the same density generator.
\end{lemma}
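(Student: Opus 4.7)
The plan is to view the 2-Wasserstein distance through its coupling (Kantorovich) formulation and reduce the lower bound to a statement about cross-covariances. Write
\[
W_2(\mathbb{P}_t,\mathbb{Q}_t)^2 \;=\; \inf_{\kappa} \int \|x-y\|_2^2\, \mathrm{d}\kappa(x,y),
\]
where the infimum is over couplings $\kappa \in \mathcal{P}(\mathbb{R}^{n_x}\times\mathbb{R}^{n_x})$ with marginals $\mathbb{P}_t$ and $\mathbb{Q}_t$. For any such $\kappa$, add and subtract the means and expand:
\[
\|x-y\|_2^2 \;=\; \bigl\|(x-\bar{w}_t) - (y-\hat{w}_t)\bigr\|_2^2 \;+\; 2(\bar{w}_t-\hat{w}_t)^{\!\top}\!\bigl[(x-\bar{w}_t)-(y-\hat{w}_t)\bigr] \;+\; \|\bar{w}_t-\hat{w}_t\|_2^2.
\]
Integrating against $\kappa$, the cross term vanishes because the marginal constraint forces $\mathbb{E}_\kappa[x-\bar{w}_t]=0$ and $\mathbb{E}_\kappa[y-\hat{w}_t]=0$. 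Denoting the cross-covariance $C_\kappa := \mathbb{E}_\kappa[(x-\bar{w}_t)(y-\hat{w}_t)^{\!\top}]$, this yields
\[
\int \|x-y\|_2^2\, \mathrm{d}\kappa \;=\; \|\bar{w}_t-\hat{w}_t\|_2^2 \;+\; \mathrm{Tr}[\Sigma_t + \hat{\Sigma}_t] \;-\; 2\,\mathrm{Tr}[C_\kappa].
\]

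The second step is to upper-bound $\mathrm{Tr}[C_\kappa]$ uniformly over all couplings with fixed marginal covariances. The classical fact (a consequence of the block-PSD condition on $\bigl[\begin{smallmatrix} \Sigma_t & C_\kappa \\ C_\kappa^{\!\top} & \hat{\Sigma}_t \end{smallmatrix}\bigr]$, or equivalently of Brenier's theorem applied to second-moment-matched Gaussians) is
\[
\mathrm{Tr}[C_\kappa] \;\le\; \mathrm{Tr}\bigl[(\Sigma_t^{1/2} \hat{\Sigma}_t \Sigma_t^{1/2})^{1/2}\bigr].
\]
Substituting this bound and taking the infimum over $\kappa$ gives
\[
W_2(\mathbb{P}_t,\mathbb{Q}_t)^2 \;\ge\; \|\bar{w}_t-\hat{w}_t\|_2^2 \;+\; \mathrm{Tr}[\Sigma_t + \hat{\Sigma}_t] \;-\; 2\,\mathrm{Tr}\bigl[(\Sigma_t^{1/2}\hat{\Sigma}_t\Sigma_t^{1/2})^{1/2}\bigr] \;=\; \|\bar{w}_t-\hat{w}_t\|_2^2 + \mathrm{B}^2(\Sigma_t,\hat{\Sigma}_t),
\]
and taking square roots yields the inequality in the statement. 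This bound is often attributed to Gelbrich.

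For the equality claim, I would invoke the structure of optimal couplings for elliptical families with a common density generator. When $\mathbb{P}_t$ and $\mathbb{Q}_t$ are both elliptical with the same generator, the map $y \mapsto \bar{w}_t + \Sigma_t^{1/2}(\Sigma_t^{1/2}\hat{\Sigma}_t\Sigma_t^{1/2})^{-1/2}\Sigma_t^{1/2}(y-\hat{w}_t)$ pushes $\mathbb{Q}_t$ onto $\mathbb{P}_t$ (elliptical families are closed under affine maps, and sharing the generator is precisely what makes the same affine map work as in the Gaussian case), and it is the gradient of a convex quadratic, so by Brenier's theorem it is the optimal transport map for the squared-Euclidean cost. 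A direct computation of the transport cost for this affine coupling achieves the Gelbrich bound with equality, establishing the second claim.

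The main obstacle is the trace inequality $\mathrm{Tr}[C_\kappa] \le \mathrm{Tr}[(\Sigma_t^{1/2}\hat{\Sigma}_t\Sigma_t^{1/2})^{1/2}]$; its cleanest derivation uses the PSD Schur complement characterization of admissible cross-covariances together with the identity $\max\{\mathrm{Tr}[X]:\, X X^{\!\top}\preceq \Sigma_t,\ X^{\!\top}X\preceq \hat{\Sigma}_t\} = \mathrm{Tr}[(\Sigma_t^{1/2}\hat{\Sigma}_t\Sigma_t^{1/2})^{1/2}]$, which itself is a matrix Cauchy–Schwarz / von Neumann trace argument. The equality part is shorter but requires care in verifying that an affine pushforward preserves the common generator, which is where the elliptical assumption is essential.
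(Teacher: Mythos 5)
Your proposal is correct in substance, but it is worth noting that the paper does not actually prove Lemma~\ref{lem:bound}: it simply cites the result as the Gelbrich bound (\cite[Th.~4]{kuhn2019wasserstein}). What you have written is essentially a self-contained proof of that cited theorem, via the standard route: expand $\|x-y\|_2^2$ about the means under an arbitrary coupling, observe that the cross term vanishes and the integral equals $\|\bar w_t-\hat w_t\|_2^2+\mathrm{Tr}[\Sigma_t+\hat\Sigma_t]-2\,\mathrm{Tr}[C_\kappa]$, bound $\mathrm{Tr}[C_\kappa]$ by $\mathrm{Tr}[(\Sigma_t^{1/2}\hat\Sigma_t\Sigma_t^{1/2})^{1/2}]$ uniformly over admissible cross-covariances, and exhibit the optimal affine (Brenier) map for the equality case. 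This buys the reader a transparent argument the paper omits, at the cost of carrying the standard technicalities (degenerate covariances require a limiting or pseudo-inverse argument; the elliptical-equality step needs the fact that the covariance of an elliptical law is the scatter matrix times a generator-dependent constant, so a shared generator makes matched covariances equivalent to matched scatter matrices). One genuine slip in your sketch of the key trace bound: the identity $\max\{\mathrm{Tr}[X]: XX^{\top}\preceq\Sigma_t,\ X^{\top}X\preceq\hat\Sigma_t\}$ does not characterize admissible cross-covariances --- in one dimension with $\Sigma_t=1$, $\hat\Sigma_t=4$ the joint-PSD condition allows $C_\kappa=2=\mathrm{Tr}[(\Sigma_t^{1/2}\hat\Sigma_t\Sigma_t^{1/2})^{1/2}]$, whereas your stated constraints cap it at $1$. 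The correct characterization is $C_\kappa=\Sigma_t^{1/2}K\hat\Sigma_t^{1/2}$ with $\|K\|\le 1$ (equivalently the Schur-complement condition $C_\kappa\hat\Sigma_t^{-1}C_\kappa^{\top}\preceq\Sigma_t$), from which the bound follows as the nuclear norm of $\hat\Sigma_t^{1/2}\Sigma_t^{1/2}$ via von Neumann's trace inequality; with that repair the argument is complete.
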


The above lower bound is also known as the Gelbrich bound~\cite[Th.~4]{kuhn2019wasserstein}. 
It evaluates the Wasserstein distance  ignoring higher-order moments, resulting in a tractable form.

The cost function~\eqref{obj} can  be approximated further by
\[
\begin{split}
{J}_\lambda(\pi,\gamma) &=  
\mathbb{E}_{\bold{y}}\Big[\mathbb{E}_{x_T}[x_T^\top Q_f x_T \mid I_T]\\
&+\sum_{t=0}^{T-1} \Big(\mathbb{E}_{x_t, w_t}[x_t^\top Q x_t + u_t^\top R u_t \mid I_t, u_t] \\
& \qquad \qquad- \lambda\big[\|\bar{w}_t - \hat{w}_t\|^2 + \mathrm{B}^2(\Sigma_t,\hat{\Sigma}_t)\big]\Big)\Big].
\end{split}
\]
Then, the following minimax  control problem approximates the original WDRC problem~\eqref{drc}:
\begin{equation}\label{minimax}
\min_{\pi\in\Pi}\max_{\gamma\in\Gamma} J_\lambda(\pi, \gamma),
\end{equation}
where the new policy space for the adversary is defined as
$\Gamma := \{\gamma := (\gamma_0, \ldots, \gamma_{T-1}) \mid \gamma_t (I_t) = \mathbb{P}_t \in \mathcal{P} (\mathbb{R}^{n_x}) \}$ and no longer depends on the ambiguity sets.
Let $(\pi^*,\gamma^*)$ denote the optimal policy pair of the approximate problem.
The performance gap between $\pi^*$ and the optimal DR policy $\pi^{opt}$  will be discussed in Section~\ref{sec:perf}.
Before that, a closed-form expression of $(\pi^*,\gamma^*)$ will be derived using mathematical induction in the following subsection.  


\subsection{Solution via Riccati Equation}

Let the value function for the approximate problem be recursively defined by $V_T(I_T) = \mathbb{E}_{x_T}[x_T^\top Q_f x_T \mid I_T]$ and
\begin{equation}\label{vf}
\begin{split}
& V_t (I_t) = \min_{u_t}\max_{\bar{w}_t, \Sigma_t\succeq 0}\mathbb{E}_{\substack{x_t, y_{t+1}, \\  w_t\sim \mathbb{P}_t}}\bigg[x_t^\top Q x_t + u_t^\top R u_t\\
& - \lambda [\|\bar{w}_t - \hat{w}_t\|^2 + \mathrm{B}^2(\Sigma_t,\hat{\Sigma}_t)]  + V_{t+1}(I_t, y_{t+1}, u_t) \mid I_t, u_t\bigg]
\end{split}
\end{equation}
for $t=T-1, \dots,0$.
Then, according to the DP principle (e.g., 
{~\cite{osogami2015robust, saghafian2018ambiguous}}), we have
\begin{equation}\label{opt_cost}
\inf_{\pi\in\Pi}\sup_{\gamma\in\Gamma} J_\lambda(\pi, \gamma) = \mathbb{E}_{y_0}[V_0(I_0)] =: J^*_\lambda.
\end{equation}
 
Let 
\[
\bar{x}_{t} := \mathbb{E}_{x_t}[x_{t} \mid I_{t}], \; \; \bar{P}_{t} :=\mathbb{E}_{x_t}[(x_{t}-\bar{x}_t)(x_{t}-\bar{x}_t)^\top \mid I_{t}]
\]
 denote the expected value and the covariance matrix of the state, respectively, conditioned on the information available at time $t$ and let
 $\xi_{t} := x_{t} - \bar{x}_{t}$
  be the difference between the actual state and its expected value. Also, let 
\[
\Phi: = B R^{-1} B^\top - \frac{1}{\lambda} I \in \mathbb{S}^{n_x}.
\]
Under the condition that $V_{t+1}$ has a quadratic form, 
we obtain an explicit solution of the minimax optimization problem in the Bellman recursion~\eqref{vf}.

\begin{lemma}\label{lem:opt_cont}
Suppose that
\begin{equation}\label{vf_next}
\begin{split}
V_{t+1}(I_{t+1}) =\, &\mathbb{E}_{x_{t+1}}[ x_{t+1}^\top P_{t+1}x_{t+1} + \xi_{t+1}^\top S_{t+1}\xi_{t+1}\\
&+ 2r_{t+1}^\top x_{t+1} \mid I_{t+1}] + z_{t+1} + \sum_{s=t+1}^{T-1} \tilde{z}_{s},
\end{split}
\end{equation}
for some $P_{t+1}\in\mathbb{S}_{+}^{n_x}, S_{t+1}\in{\mathbb{S}^{n_x}_+}, r_{t+1}\in\mathbb{R}^{n_x}, z_{t+1}\in\mathbb{R}$, and $\tilde{z}_{s}\in\mathbb{R}, s=t+1,\dots, T-1$.
Assume further that the penalty parameter satisfies the condition $\lambda I \succ P_{t+1}$. Then,  the outer minimization problem has the following unique solution:
\begin{equation}\label{u_opt}
u_t^* = K_t \bar{x}_t + L_t,
\end{equation}
where
\begin{equation}\label{contr_params}
\begin{split}
K_t &= - R^{-1}B^\top(I+P_{t+1}\Phi)^{-1} P_{t+1} A,\\
L_t &= - R^{-1}B^\top(I+P_{t+1}\Phi)^{-1}  (P_{t+1} \hat{w}_t + r_{t+1}).
\end{split}
\end{equation}
Moreover, given $u_t^*$,
the inner maximization problem with respect to $\bar{w}_t$ has the following unique solution:
\begin{equation}\label{mu_st}
    \bar{w}_t^*  = (\lambda I - P_{t+1})^{-1} \big(r_{t+1} + P_{t+1}[A \bar{x}_t + B u_t^*] + \lambda\hat{w}_t\big).
\end{equation}
\end{lemma}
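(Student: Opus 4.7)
My plan is to substitute the quadratic ansatz~\eqref{vf_next} into the Bellman recursion~\eqref{vf}, evaluate the conditional expectations in closed form using the dynamics, and then solve the resulting decoupled minimax subproblems via first-order conditions. First, using $x_{t+1}=Ax_t+Bu_t+w_t$ together with the conditional independence of $x_t$ and $w_t$ given $I_t$, I would expand $\mathbb{E}[x_{t+1}^\top P_{t+1} x_{t+1}\mid I_t,u_t]=(A\bar x_t+Bu_t+\bar w_t)^\top P_{t+1}(A\bar x_t+Bu_t+\bar w_t)+\mathrm{Tr}[P_{t+1}(A\bar P_tA^\top+\Sigma_t)]$ and $\mathbb{E}[2r_{t+1}^\top x_{t+1}\mid I_t,u_t]=2r_{t+1}^\top(A\bar x_t+Bu_t+\bar w_t)$. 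For the $S_{t+1}$-term I would invoke the standard Kalman recursion, which makes the posterior covariance $\bar P_{t+1}$---and hence $\mathbb{E}_{y_{t+1}}[\mathrm{Tr}(S_{t+1}\bar P_{t+1})\mid I_t,u_t]$---a function of $\bar P_t,\Sigma_t,M,C_{t+1}$ only, independent of $u_t$ and $\bar w_t$.

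After these substitutions the Bellman objective splits cleanly into a quadratic in $(u_t,\bar w_t)$, a $\Sigma_t$-only part, and additive constants. The $\Sigma_t$-subproblem therefore decouples from the controller design and can be folded into the $\tilde z_s$ terms. The remaining subproblem is strictly convex in $u_t$ (since $R\succ 0$ and $P_{t+1}\succeq 0$) and strictly concave in $\bar w_t$ by the penalty hypothesis $\lambda I\succ P_{t+1}$, so a unique saddle point exists and is pinned down by the two stationarity equations. The $\bar w_t$-FOC reads $(\lambda I-P_{t+1})\bar w_t=P_{t+1}(A\bar x_t+Bu_t)+r_{t+1}+\lambda\hat w_t$, which solves uniquely for $\bar w_t$ as a linear function of $u_t$ and, once $u_t^*$ is substituted, reduces to~\eqref{mu_st}. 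Substituting that $\bar w_t(u_t)$ into the $u_t$-FOC $Ru_t+B^\top(P_{t+1}\mu_t+r_{t+1})=0$, with $\mu_t:=A\bar x_t+Bu_t+\bar w_t$, and using the identity $\bar w_t-\hat w_t=\lambda^{-1}(P_{t+1}\mu_t+r_{t+1})$, yields $(I+\Phi P_{t+1})\mu_t^*=A\bar x_t+\hat w_t-\Phi r_{t+1}$. Combining this with $u_t^*=-R^{-1}B^\top(P_{t+1}\mu_t^*+r_{t+1})$ and applying the elementary identities $P_{t+1}(I+\Phi P_{t+1})^{-1}=(I+P_{t+1}\Phi)^{-1}P_{t+1}$ and $I-(I+P_{t+1}\Phi)^{-1}P_{t+1}\Phi=(I+P_{t+1}\Phi)^{-1}$ collapses the expression to~\eqref{u_opt} with gains~\eqref{contr_params}.

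The main obstacle I anticipate is twofold: first, verifying that the partial-observation bookkeeping (the Kalman update of $\bar x_{t+1}$ and the trace $\mathrm{Tr}[S_{t+1}\bar P_{t+1}]$) contributes nothing to either the $u_t$- or $\bar w_t$-subproblem, so that the saddle is driven purely by $P_{t+1},r_{t+1},\hat w_t$; and second, carrying out the matrix algebra that packages the two FOCs into the compact form~\eqref{contr_params}. Well-posedness of that form requires invertibility of $I+P_{t+1}\Phi$, which I would argue from $\lambda I\succ P_{t+1}$ by noting that the eigenvalues of $P_{t+1}\Phi$ coincide with those of the symmetric matrix $P_{t+1}^{1/2}BR^{-1}B^\top P_{t+1}^{1/2}-\lambda^{-1}P_{t+1}$, which strictly dominates $-I$ under the penalty assumption.
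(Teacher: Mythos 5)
Your proposal is correct and follows essentially the same route as the paper's proof: expand the quadratic ansatz through the dynamics, observe that the $\mathrm{Tr}[S_{t+1}\bar{P}_{t+1}]$ and $\Sigma_t$ terms decouple from $(u_t,\bar{w}_t)$, obtain the stationarity conditions (your $\bar{w}_t$-equation is exactly the paper's, and your $\mu_t$-based elimination reproduces the paper's $g_t^* = (I+P_{t+1}\Phi)^{-1}(P_{t+1}A\bar{x}_t + P_{t+1}\hat{w}_t + r_{t+1})$ since $g_t = P_{t+1}\mu_t + r_{t+1}$), and collapse the algebra with the push-through identity. Your explicit eigenvalue argument for the invertibility of $I+P_{t+1}\Phi$ is a small addition the paper leaves implicit, but it does not change the structure of the argument.
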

The proof of this lemma can be found in Appendix~\ref{app:opt_cont}.
It follows from the above lemma that if $V_{t+1}$ is in the quadratic form, $V_t$ will also be quadratic under a certain condition on $\lambda$ similar to~\cite[Assumption 1]{kim2020minimax}. More specifically, we impose the following assumption on the penalty parameter.

\begin{assumption}\label{ass:lambda_ass}
The penalty parameter satisfies  $\lambda I \succ P_{t}$ for all $t=1,\dots, T$.
\end{assumption}

Under this assumption, we can use mathematical induction backward in time to recursively  show that the value functions $V_t$  are quadratic for all $t$.
Accordingly, a closed-form expression of the optimal policy is obtained. The following theorem formalizes the results.

\begin{theorem}\label{thm:sol}
Suppose that Assumption~\ref{ass:lambda_ass} holds. Then, the value function at time $t = 0, \ldots, T$ has the following quadratic form:
\[
V_t(I_t) = \mathbb{E}_{x_t}[x_t^\top P_t x_t + \xi_t^\top S_{t}\xi_t + 2 r_t^\top x_t \mid I_t] + z_{t} + \sum_{s=t}^{T-1}\tilde{z}_s.
\]
Here, the coefficients $P_{t+1}\in\mathbb{S}_{+}^{n_x}, S_{t+1}\in\mathbb{S}^{n_x}_+, r_{t+1}\in\mathbb{R}^{n_x}$ and $z_{t+1}\in\mathbb{R}$ can be
found recursively  using a Riccati equation, defined as
\begin{align}
    P_t  =\, & Q + A^\top (I + P_{t+1}\Phi)^{-1} P_{t+1} A \label{P}\\
    S_t  = \, & Q + A^\top  P_{t+1} A - P_t\label{S}\\
    r_t  =  \, & A^\top (I + P_{t+1}\Phi)^{-1} ( r_{t+1} + P_{t+1}\hat{w}_t)\label{r}\\
    z_t = \, & z_{t+1}  + (2\hat{w}_t - \Phi r_{t+1})^\top(I + P_{t+1}\Phi)^{-1} r_{t+1}\nonumber\\
    &+\hat{w}_t^\top (I + P_{t+1}\Phi)^{-1}  P_{t+1} \hat{w}_t - \lambda\mathrm{Tr}[\hat{\Sigma}_t]\label{z} 
\end{align}
with the terminal conditions $P_T = Q_f, S_T = \mathbf{0}_{n_x\times n_x}, r_T = \mathbf{0}_{n_x}$ and $z_T = 0$.
The coefficients $\tilde{z}_t, t=0,\dots, T-1$, can be found recursively as the optimal value of the following maximization problem:
\begin{equation}\label{z_tilde}
\begin{split}
\tilde{z}_t:=\max_{\Sigma_t\succeq 0}\mathrm{Tr}[S_{t+1} \bar{P}_{t+1}] &+ \mathrm{Tr}[(P_{t+1} - \lambda I) \Sigma_t] \\
&+ 2 \lambda \mathrm{Tr}[(\Sigma_t^{1/2}\hat{\Sigma}_t\Sigma_t^{1/2})^{1/2}].
\end{split}
\end{equation}

Moreover, the optimal control policy is unique and is obtained as
\begin{equation}\label{opt_pol}
\pi^*_t(I_t) = K_t\bar{x}_t + L_t,
\end{equation}
with $K_t$ and $L_t$ defined in~\eqref{contr_params}.
\end{theorem}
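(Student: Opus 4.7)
\medskip
\noindent\textbf{Proof plan for Theorem~\ref{thm:sol}.}

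My plan is to proceed by backward induction on $t$, invoking Lemma~\ref{lem:opt_cont} at each stage. For the base case $t=T$, the boundary condition $V_T(I_T)=\mathbb{E}_{x_T}[x_T^\top Q_f x_T \mid I_T]$ matches the claimed quadratic form with $P_T=Q_f$, $S_T=0$, $r_T=0$, $z_T=0$, and an empty trailing sum. Since Assumption~\ref{ass:lambda_ass} ensures $\lambda I \succ P_{t+1}$ throughout the horizon, the hypotheses of Lemma~\ref{lem:opt_cont} are verified at every step, so the inner minimax in~\eqref{vf} admits the unique saddle $(u_t^*, \bar{w}_t^*)$ given by~\eqref{u_opt} and~\eqref{mu_st}.

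For the inductive step, assume $V_{t+1}$ has the stated form~\eqref{vf_next}. I would substitute $u_t^*$ and $\bar{w}_t^*$ into~\eqref{vf} and expand all conditional expectations. The decomposition $x_t=\bar{x}_t+\xi_t$, with $\mathbb{E}[\xi_t\mid I_t]=0$ and $\mathbb{E}[\xi_t\xi_t^\top\mid I_t]=\bar{P}_t$, yields $\mathbb{E}[x_t^\top Q x_t \mid I_t]=\bar{x}_t^\top Q\bar{x}_t+\mathrm{Tr}[Q\bar{P}_t]$. Propagating the dynamics one step gives $\bar{x}_{t+1}=A\bar{x}_t+Bu_t^*+\bar{w}_t^*+\eta_{t+1}$, where the innovation $\eta_{t+1}$ is measurable with respect to $y_{t+1}$ given $I_t$, has zero conditional mean, and obeys a Kalman-type covariance recursion that is linear in $\Sigma_t$ on the prior step. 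Using this and the quadratic form of $V_{t+1}$, I would compute $\mathbb{E}[x_{t+1}^\top P_{t+1} x_{t+1}\mid I_{t+1}]$ and $\mathbb{E}[\xi_{t+1}^\top S_{t+1}\xi_{t+1}\mid I_{t+1}]$ and then take the outer expectation over $y_{t+1}$.

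Grouping the resulting expression by powers of $\bar{x}_t$, the quadratic, linear, and control-independent constant pieces reproduce~\eqref{P}, \eqref{r}, and~\eqref{z}, following the same algebra that underlies Lemma~\ref{lem:opt_cont}. Matching the $\xi_t$-quadratic piece that survives after decomposing $x_t$ and $x_{t+1}$ in $V_{t+1}$ produces~\eqref{S}, and the positive-semidefiniteness of $P_t$ and $S_t$ is then read off from the standard Riccati identities. The remaining $\Sigma_t$-dependent terms assemble into the objective of~\eqref{z_tilde}: the Wasserstein penalty contributes $-\lambda\,\mathrm{Tr}[\Sigma_t]$, which combines with $\mathbb{E}_{w_t\sim\mathbb{P}_t^*}[(w_t-\bar{w}_t^*)^\top P_{t+1}(w_t-\bar{w}_t^*)]=\mathrm{Tr}[P_{t+1}\Sigma_t]$ to give $\mathrm{Tr}[(P_{t+1}-\lambda I)\Sigma_t]$; the Bures piece $2\lambda\,\mathrm{Tr}[(\Sigma_t^{1/2}\hat{\Sigma}_t\Sigma_t^{1/2})^{1/2}]$ comes directly from $\mathrm{B}^2$; and $\mathrm{Tr}[S_{t+1}\bar{P}_{t+1}]$ arises from propagating the conditional state covariance through the Kalman update. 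Because this residual is independent of both $\bar{x}_t$ and the control, its maximum over $\Sigma_t\succeq 0$ is the scalar $\tilde{z}_t$, which is appended to the running sum; the closed-form policy~\eqref{opt_pol} is then immediate from~\eqref{u_opt}.

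The main obstacle I anticipate is cleanly isolating the $\Sigma_t$-dependence so that only $\tilde{z}_t$ absorbs it, which requires careful bookkeeping to show that $\mathrm{Tr}[S_{t+1}\bar{P}_{t+1}]$ is the sole $\Sigma_t$-coupling that leaks out of the recursions for $P_t$, $r_t$, $z_t$. A secondary subtlety is establishing that the supremum in~\eqref{z_tilde} is finite and attained: under Assumption~\ref{ass:lambda_ass} the coefficient $P_{t+1}-\lambda I\prec 0$ provides the coercivity that dominates the concave square-root-growth Bures term, while an explicit SDP reformulation (to be spelled out in the sequel) delivers the attaining $\Sigma_t^*$ and thereby completes the induction.
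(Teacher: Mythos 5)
Your plan is essentially the paper's own proof: backward induction from $V_T$, invoking Lemma~\ref{lem:opt_cont} at each stage (justified by Assumption~\ref{ass:lambda_ass}), substituting $u_t^*$ and $\bar{w}_t^*$ into the Bellman recursion, and regrouping via $x_t=\bar{x}_t+\xi_t$ so that the $\bar{x}_t$-quadratic, linear, and constant pieces yield \eqref{P}--\eqref{z} while the $\Sigma_t$-dependent residual $\mathrm{Tr}[S_{t+1}\bar{P}_{t+1}]+\mathrm{Tr}[(P_{t+1}-\lambda I)\Sigma_t]+2\lambda\mathrm{Tr}[(\Sigma_t^{1/2}\hat{\Sigma}_t\Sigma_t^{1/2})^{1/2}]$ is absorbed into $\tilde{z}_t$. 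The additional points you flag (attainment of the supremum in \eqref{z_tilde} via the SDP reformulation, and the Kalman covariance bookkeeping) are consistent with how the paper handles them outside the formal proof, so no gap.
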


The proof of this theorem can be found in Appendix~\ref{app:sol}.
As a corollary, 
any worst-case disturbance distribution generated by 
 the optimal distribution policy $\gamma^*(\bar{x}_t)$ has the mean given by \eqref{mu_st}, while its covariance is the optimal solution of~\eqref{z_tilde}.

It is worth noting that \eqref{z_tilde} requires the computation of the conditional covariance matrix $\bar{P}_{t+1}$ of $x_{t+1}$. Therefore, there is a need for an estimator to produce a state estimate given observations.
As one of the most common state estimator for linear systems, we use the Kalman filter~\cite{kalman1960new}. Specifically, in each time stage $t+1$, given a control input $u_t$, an observation $y_{t+1}$ and a disturbance distribution with mean $\bar{w}_t$ and covariance $\Sigma_t$, the expected value of state $x_{t+1}$ is computed as
\[
\bar{x}_{t+1} = \bar{x}_{t+1|t} + \bar{P}_{t+1}C^\top M^{-1} (y_{t+1} - C \bar{x}_{t+1|t}),
\]
where $\bar{x}_{t+1|t} = A\bar{x}_{t} + B u_t + \bar{w}_t$. The conditional covariance matrix $\bar{P}_{t+1}$ is computed  recursively as follows:
\begin{equation}\label{post_cov}
    \bar{P}_{t+1}  =  \bar{P}_{t+1|t} - \bar{P}_{t+1|t} C^\top(C \bar{P}_{t+1|t} C^\top + M)^{-1}C \bar{P}_{t+1|t},
    \end{equation}
    where
    \begin{equation}\label{prior_cov}
    \bar{P}_{t+1|t} = A \bar{P}_t A^\top + \Sigma_t.
\end{equation}
Note that the state estimates in our case are computed using the mean vector $\bar{w}_t^*$ and the covariance matrix $\Sigma_t^*$ of the worst-case disturbance distribution. 

When using the Kalman filter,  the intractability of~\eqref{z_tilde} can be tackled by
 leveraging the conditional covariance equations~\eqref{post_cov} and~\eqref{prior_cov}. Specifically, the problem~\eqref{z_tilde} can be reformulated as the following tractable SDP problem:
\begin{equation}\label{sdp}
\begin{split}
\max_{\substack{U,V\succeq 0,\\\Sigma_t\succeq 0}} \; & \mathrm{Tr}[S_{t+1} V] + \mathrm{Tr}[(P_{t+1}-\lambda I)\Sigma_t] + 2\lambda \mathrm{Tr}[U]\\
\mbox{s.t.} \; &  \begin{bmatrix} \hat{\Sigma}_t^{1/2} \Sigma_t \hat{\Sigma}_t^{1/2} & U \\ U & I\end{bmatrix} \succeq 0\\
&\begin{bmatrix}\bar{P}_{t+1\mid t} - V & \bar{P}_{t+1\mid t} C^\top \\ C \bar{P}_{t+1 \mid t} & C \bar{P}_{t+1\mid t} C^\top + M\end{bmatrix}\succeq 0\\
&C \bar{P}_{t+1\mid t} C^\top + M \succeq 0\\
& \bar{P}_{t+1\mid t} = A \bar{P}_t A^\top + \Sigma_t,
\end{split}
\end{equation}
where we used the property that $\mathrm{Tr}[S_{t+1}V] \leq \mathrm{Tr}[S_{t+1}V']$ for any $V\preceq V'$ and applied the Schur complement lemma to replace the inequality constraints with corresponding linear matrix inequality ones. The reformulated problem~\eqref{sdp} can be efficiently solved using off-the-shelf solvers~\cite{o2016conic, andersen2003implementing, aps2019mosek}. As a result, the constant $\tilde{z}_t$ and the worst-case covariance matrix $\Sigma_t^*$ can be found forward in time, starting from $\bar{P}_0$.

These results lead us to a WDRC algorithm for partially observable linear systems, described as follows.
In the backward pass, 
we first iteratively solve the Riccati equation~\eqref{P}--\eqref{z}  for $t=T-1, \dots, 0$ to construct the control policy $\pi_t^*$ using Theorem~\ref{thm:sol}. 
In the forward pass, the estimates $\bar{x}_{0}$ and $\bar{P}_{0}$ are initially obtained using the initial state distribution $f_x(x_0)$ and the observation $y_0$. Then, at each time $t$, the  mean $\bar{w}_t^*$ of the worst-case disturbance distribution is obtained as~\eqref{mu_st}, while its covariance matrix $\Sigma_t^*$ is computed by solving the SDP problem~\eqref{sdp}. 
The policy $\pi_t^*(I_t)$ in~\eqref{opt_pol} is then applied to the system. Finally, the estimates $\bar{x}_{t+1}$ and $\bar{P}_{t+1}$ are updated using the worst-case disturbance distribution and the new observation $y_{t+1}$.

\section{Performance Guarantee}\label{sec:perf}

In this section, we compare the performances of the approximate policy $\pi^*$ and the optimal DR policy $\pi^{opt}$.
We first show that the total cost incurred by the approximate policy is bounded above by the optimal cost \eqref{opt_cost} of the approximate problem
 plus a constant. 
 This guaranteed cost property is then used to obtain a provable bound for the optimality gap.

\subsection{Guaranteed Cost Property}

The following proposition indicates the guaranteed cost property of our approximate policy $\pi^*$ for any worst-case disturbance distributions selected from the Wasserstein ambiguity sets.

\begin{proposition}\label{prop:gc}
Given $\lambda > 0$, 
let $\pi_\lambda^*$ be the optimal control policy of the approximate problem~\eqref{minimax}.
Then, the cost incurred by $\pi_\lambda^*$ under the worst-case distribution policy in $\Gamma_{\mathcal{D}}$ is bounded  as follows:
\[
 \sup_{\gamma \in \Gamma_{\mathcal{D}}} J(\pi_\lambda^*, \gamma) \leq \lambda T \theta^2 +  J^*_\lambda,
\]
where  $J^*_\lambda$ is the optimal cost~\eqref{opt_cost} of the approximate problem. 
\end{proposition}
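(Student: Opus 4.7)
The plan is to exploit the fact that the approximate cost $J_\lambda$ differs from the true cost $J$ by exactly the (expected) approximate Wasserstein penalty, which the Gelbrich bound lets us control uniformly over the ambiguity set. Concretely, from the definition of $J_\lambda$,
\[
J(\pi,\gamma) = J_\lambda(\pi,\gamma) + \lambda \sum_{t=0}^{T-1} \mathbb{E}_{\bold{y}}\bigl[\|\bar{w}_t - \hat{w}_t\|^2 + \mathrm{B}^2(\Sigma_t,\hat{\Sigma}_t)\bigr],
\]
so the guaranteed cost property will reduce to bounding the penalty term for every adversary policy $\gamma \in \Gamma_{\mathcal{D}}$.

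The first key step is to invoke Lemma~\ref{lem:bound}: for any $\gamma \in \Gamma_{\mathcal{D}}$, the chosen distribution $\mathbb{P}_t = \gamma_t(I_t)$ satisfies $W_2(\mathbb{P}_t,\mathbb{Q}_t) \leq \theta$, and thus the Gelbrich bound gives
\[
\|\bar{w}_t - \hat{w}_t\|^2 + \mathrm{B}^2(\Sigma_t,\hat{\Sigma}_t) \;\leq\; W_2(\mathbb{P}_t,\mathbb{Q}_t)^2 \;\leq\; \theta^2
\]
pointwise in $I_t$. Taking the expectation over $\bold{y}$ preserves this inequality, and summing over $t$ yields
\[
J(\pi_\lambda^*,\gamma) \;\leq\; J_\lambda(\pi_\lambda^*,\gamma) + \lambda T \theta^2 \qquad \forall \gamma \in \Gamma_{\mathcal{D}}.
\]

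The second key step is to take the supremum over $\gamma \in \Gamma_{\mathcal{D}}$ and then enlarge the feasible set. Since $\mathcal{D}_t \subseteq \mathcal{P}(\mathbb{R}^{n_x})$ for every $t$, we have $\Gamma_{\mathcal{D}} \subseteq \Gamma$, so
\[
\sup_{\gamma \in \Gamma_{\mathcal{D}}} J_\lambda(\pi_\lambda^*,\gamma) \;\leq\; \sup_{\gamma \in \Gamma} J_\lambda(\pi_\lambda^*,\gamma) \;=\; J^*_\lambda,
\]
where the final equality is exactly the optimality of $\pi_\lambda^*$ in the approximate minimax problem~\eqref{minimax} together with the DP identity~\eqref{opt_cost}. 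Combining this with the previous display gives the claimed bound.

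There is no serious technical obstacle here; the proof is essentially a two-line chain of inequalities once the Gelbrich bound is in hand. The only point worth being careful about is that the penalty term in $J_\lambda$ is penalized \emph{pointwise in $I_t$} by the adversary's choice of $(\bar{w}_t,\Sigma_t)$, so one must confirm that the Gelbrich bound applies at the level of each realization $I_t$ before taking the outer expectation $\mathbb{E}_{\bold{y}}$; this is immediate because the ambiguity constraint $W_2(\mathbb{P}_t,\mathbb{Q}_t)\leq\theta$ is imposed on $\gamma_t(I_t)$ for every $I_t$. Everything else is algebraic manipulation.
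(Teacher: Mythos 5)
Your proposal is correct and follows essentially the same route as the paper's proof: bound the pointwise penalty term by $\theta^2$ via the Gelbrich inequality and the ambiguity constraint, then pass from $\Gamma_{\mathcal{D}}$ to $\Gamma$ and use the optimality of $\pi_\lambda^*$ to identify $\sup_{\gamma\in\Gamma}J_\lambda(\pi_\lambda^*,\gamma)$ with $J^*_\lambda$. The only cosmetic difference is that the paper handles the supremum with an $\epsilon$-argument while you take suprema directly over the valid pointwise inequality, which is equally fine.
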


\begin{proof}
Although the proof is similar to that of \cite[Lemma~3, Th.~6]{Kim2021} for the full observation case, we have included it for the completeness of the paper. 
Fix $\lambda > 0$. 
Let $\mathrm{LHS} := \sup_{\gamma \in \Gamma_{\mathcal{D}}} J(\pi_\lambda^*, \gamma)$ and 
$\mathrm{RHS} := \lambda T \theta^2 +  J^*_\lambda$. 
For any $\epsilon > 0$, there exists $\gamma^\epsilon \in \Gamma_{\mathcal{D}}$ such that
\[
\mathrm{LHS} - \epsilon < J(\pi_\lambda^*, \gamma^\epsilon).
\]
Recall that 
\[
\| \bar{w}_t - \hat{w}_t \|_2^2 + \mathrm{B}^2(\Sigma_t, \hat{\Sigma}_t ) \leq W_2(\mathbb{P}_t, \mathbb{Q}_t)^2 \leq \theta^2.
\]
Thus, it follows from  $\gamma^\epsilon \in \Gamma_{\mathcal{D}}$ that
\begin{equation}\nonumber
\begin{split}
J (\pi_\lambda^*, \gamma^\epsilon) &\leq \lambda T \theta^2 + J_\lambda (\pi_\lambda^*, \gamma^\epsilon)\\
& \leq \lambda T \theta^2 + \sup_{\gamma \in \Gamma} J_\lambda (\pi_\lambda^*, \gamma)=  \lambda T \theta^2 + J^*_\lambda. 
\end{split}
\end{equation}
Since $\epsilon$ was chosen arbitrarily, 
$\mathrm{LHS} \leq \mathrm{RHS}$.
\end{proof}

This proposition implies the distributional robustness of the optimal control policy $\pi_\lambda^*$ of our approximate problem. 
In the following subsection, we use this property to bound the performance gap between $\pi_\lambda^*$ and the optimal DR policy.

\subsection{Optimality Gap}

Let $V^\mathrm{DR}_t$ denote the optimal value function of the original WDRC problem~\eqref{drc}, recursively defined similar to~\eqref{vf}. 
Then,  the DP principle yields
\[
J^*_\mathrm{DR} := \inf_{\pi \in \Pi} \sup_{\gamma \in \Gamma_{\mathcal{D}}} J(\pi, \gamma) =\mathbb{E}_{y_0}[V_0^\mathrm{DR}(I_0)],
\]
which corresponds to the optimal cost incurred by $\pi^{opt}$.
Thus,  $V_0^\mathrm{DR} (I_0)$ represents the performance of the optimal policy in the WDRC setting. 

Similarly, let $V^\mathrm{LQ}_t$ denote the optimal value function of the standard LQ control problem with the nominal distribution.
Again by the DP principle, we have
\begin{equation}\nonumber
\begin{split}
&J^*_\mathrm{LQ} = \inf_{\pi \in \Pi}\mathbb{E}_{\bold{y}}\bigg [ \mathbb{E}_{x_T}[x_T^\top Q_f x_T \mid I_T] \:  \\ 
&
+ \sum_{t=0}^{T-1} \mathbb{E}_{x_t, w_t \sim \mathbb{Q}_t}[  x_t^\top Q x_t + u_t^\top R u_t \mid I_t, u_t] \bigg ].
\end{split}
\end{equation}
The LQ control problem can be interpreted as the particular case of the WDRC problem in which the adversary is devoted to the nominal distribution policy $\hat{\gamma} (I_t) = \mathbb{Q}_t \in \mathcal{D}_t$.
Since the nominal distribution policy is admissible, that is, $\hat{\gamma} \in \Gamma_{\mathcal{D}}$, we have
$J^*_\mathrm{LQ} \leq J^*_\mathrm{DR}$.

Using the relationships between the standard LQ control, the WDR control and our approximation, we obtain the following bound for the optimality gap.

\begin{theorem}
Given $\lambda > 0$, the optimal control policy $\pi_\lambda^*$ of the approximate problem~\eqref{minimax} has the following relative performance guarantee:
\[
J^*_{\mathrm{DR}} \leq \sup_{\gamma \in \Gamma_{\mathcal{D}}} J(\pi_\lambda^*, \gamma) \leq \rho J^*_{\mathrm{DR}},
\]
where the relative performance guarantee $\rho$ is given by
\begin{equation}\label{perf_guar}
\rho:= \frac{\lambda T \theta^2 + {J^*_\lambda}}{J^*_\mathrm{LQ}} > 1.
\end{equation}
\end{theorem}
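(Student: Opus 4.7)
The plan is to decompose the chain of inequalities and recognize that almost all the work has been done by Proposition~\ref{prop:gc} and by the monotonicity observation $J^*_{\mathrm{LQ}} \leq J^*_{\mathrm{DR}}$ already established just above the theorem. The proof will have three parts: the left inequality, the right inequality, and the strict lower bound $\rho > 1$.

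For the left inequality $J^*_{\mathrm{DR}} \leq \sup_{\gamma \in \Gamma_{\mathcal{D}}} J(\pi_\lambda^*, \gamma)$, I would simply note that by the definition
\[
J^*_{\mathrm{DR}} = \inf_{\pi \in \Pi} \sup_{\gamma \in \Gamma_{\mathcal{D}}} J(\pi, \gamma),
\]
the infimum is upper-bounded by the value attained by any admissible policy, in particular $\pi_\lambda^* \in \Pi$. This step is immediate and requires no further machinery.

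For the right inequality $\sup_{\gamma \in \Gamma_{\mathcal{D}}} J(\pi_\lambda^*, \gamma) \leq \rho J^*_{\mathrm{DR}}$, I would invoke Proposition~\ref{prop:gc} to get $\sup_{\gamma \in \Gamma_{\mathcal{D}}} J(\pi_\lambda^*, \gamma) \leq \lambda T \theta^2 + J^*_\lambda$, and then combine this with $J^*_{\mathrm{LQ}} \leq J^*_{\mathrm{DR}}$ to obtain
\[
\rho J^*_{\mathrm{DR}} \geq \rho J^*_{\mathrm{LQ}} = \lambda T \theta^2 + J^*_\lambda \geq \sup_{\gamma \in \Gamma_{\mathcal{D}}} J(\pi_\lambda^*, \gamma).
\]

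The only piece that needs a small separate argument is the strict inequality $\rho > 1$, which requires $\lambda T \theta^2 + J^*_\lambda > J^*_{\mathrm{LQ}}$. The idea is to show $J^*_\lambda \geq J^*_{\mathrm{LQ}}$ by choosing $\gamma = \hat{\gamma}$ (the nominal distribution policy) inside the inner supremum of the approximate minimax problem. Under $\hat{\gamma}$, the Wasserstein penalty vanishes, so $J_\lambda(\pi,\hat{\gamma}) = J(\pi,\hat{\gamma})$, and since $\hat{\gamma} \in \Gamma$, we have $\sup_{\gamma \in \Gamma} J_\lambda(\pi, \gamma) \geq J(\pi, \hat{\gamma})$. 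Taking the infimum over $\pi$ on both sides yields $J^*_\lambda \geq J^*_{\mathrm{LQ}}$. Combined with $\lambda > 0$ and $\theta > 0$, this gives the strict inequality. The main (very mild) obstacle here is just making sure the nominal policy is genuinely admissible in the enlarged space $\Gamma$, which is clear since $\Gamma$ contains all Borel probability measures on $\mathbb{R}^{n_x}$.
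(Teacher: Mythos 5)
Your proposal follows essentially the same route as the paper: the left inequality from the definition of the infimum, and the right inequality by chaining Proposition~\ref{prop:gc} with $J^*_{\mathrm{LQ}} \leq J^*_{\mathrm{DR}}$ (itself obtained via the admissible nominal policy $\hat{\gamma}$). Your extra argument that $J^*_\lambda \geq J^*_{\mathrm{LQ}}$ (by playing $\hat{\gamma}$ in the penalized problem, where the penalty vanishes) to justify $\rho > 1$ is correct and is in fact a detail the paper's proof asserts without spelling out.
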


\begin{proof}
By the definition of the optimal value function, ${J^*_\mathrm{DR}}\leq \sup_{\gamma \in \Gamma_{\mathcal{D}}} J(\pi_\lambda^*, \gamma)$.
Let the nominal distribution policy $\hat{\gamma}_t$ be defined by
\[
\hat{\gamma}_t (I_t) = \mathbb{Q}_t \in \mathcal{D}. 
\]
Then, $\hat{\gamma} \in \Gamma_{\mathcal{D}}$. 
Thus, we have
\[
{J^*_\mathrm{LQ}} = \inf_{\pi \in \Pi} J(\pi, \hat{\gamma}) \leq {J^*_\mathrm{DR}}.
\]
It follows from Proposition~\ref{prop:gc} that
\begin{equation}\nonumber
\sup_{\gamma \in \Gamma_{\mathcal{D}}} J(\pi_\lambda^*, \gamma) \leq  \lambda T \theta^2 + {J^*_\lambda}= \rho {J^*_\mathrm{LQ}} \leq \rho {J^*_\mathrm{DR}}. 
\end{equation}
\end{proof}

The relative performance guarantee $\rho$ can be computed by solving the standard LQ control problem with the nominal distribution and our approximate problem~\eqref{minimax}.  
Both problems are tractable to solve via Riccati equations, unlike the original WDRC problem. 
It is often desirable to select $\lambda$ such that it minimizes $\rho$ or, equivalently, $\lambda T \theta^2 + {J^*_\lambda}$.
Since $\lambda$ is a scalar, such a  $\lambda$ can be efficiently obtained by a binary search.

\section{Simulation Results}\label{sec:exp}

In this section, we demonstrate the performance of the proposed method using $(i)$ a Gaussian disturbance and $(ii)$ a non-Gaussian disturbance. We compare our WDRC method with the standard partially observable LQG control, which uses estimated distributions of the disturbances.

More specifically, we consider a discrete-time system with the following parameters:
\[
    A = \begin{bmatrix}
    0.518 & 0.266\\
    0.405 & 0.806
    \end{bmatrix}, 
    B = \begin{bmatrix}
    -2.972 \\ -2.271
    \end{bmatrix},
C = \begin{bmatrix}1.023 & 1.955\end{bmatrix},
\]
which is unstable due to an eigenvalue outside the unit circle. The controller is required to minimize the cost with parameters $Q = Q_f = R = I$ over the time horizon of $T=50$.

The nominal disturbance distribution of $w_t$ is estimated as a Gaussian with the following mean and covariance matrix constructed  from $N=5$ sample data:
\[
\hat{w}_t = \frac{1}{N}\sum_{i=1}^{N} \tilde{w}_t^{(i)}, \; \hat{\Sigma}_t = \frac{1}{N}\sum_{i=1}^{N} (\tilde{w}_t^{(i)} - \hat{w}_t)(\tilde{w}_t^{(i)} - \hat{w}_t)^\top,
\]
where $\tilde{w}_t^{(i)}$ is the $i$th sample disturbance drawn from the true probability distribution. The states are estimated via the Kalman filter.\footnote{Since the actual disturbance distribution is unknown, the mean and covariance of the nominal distribution are used in the Kalman filter for the standard LQG.}

All algorithms were implemented in Python and run on a PC with an Intel Core i7-8700K (3.70 GHz) CPU and 32 GB RAM.\footnote{The source code of our implementation is available online: \href{https://github.com/CORE-SNU/PO-WDRC}{\tt https://github.com/CORE-SNU/PO-WDRC}}

\subsection{Gaussian Disturbances}

\begin{figure}[t]
    \centering
    \centering
    \includegraphics[height=2in]{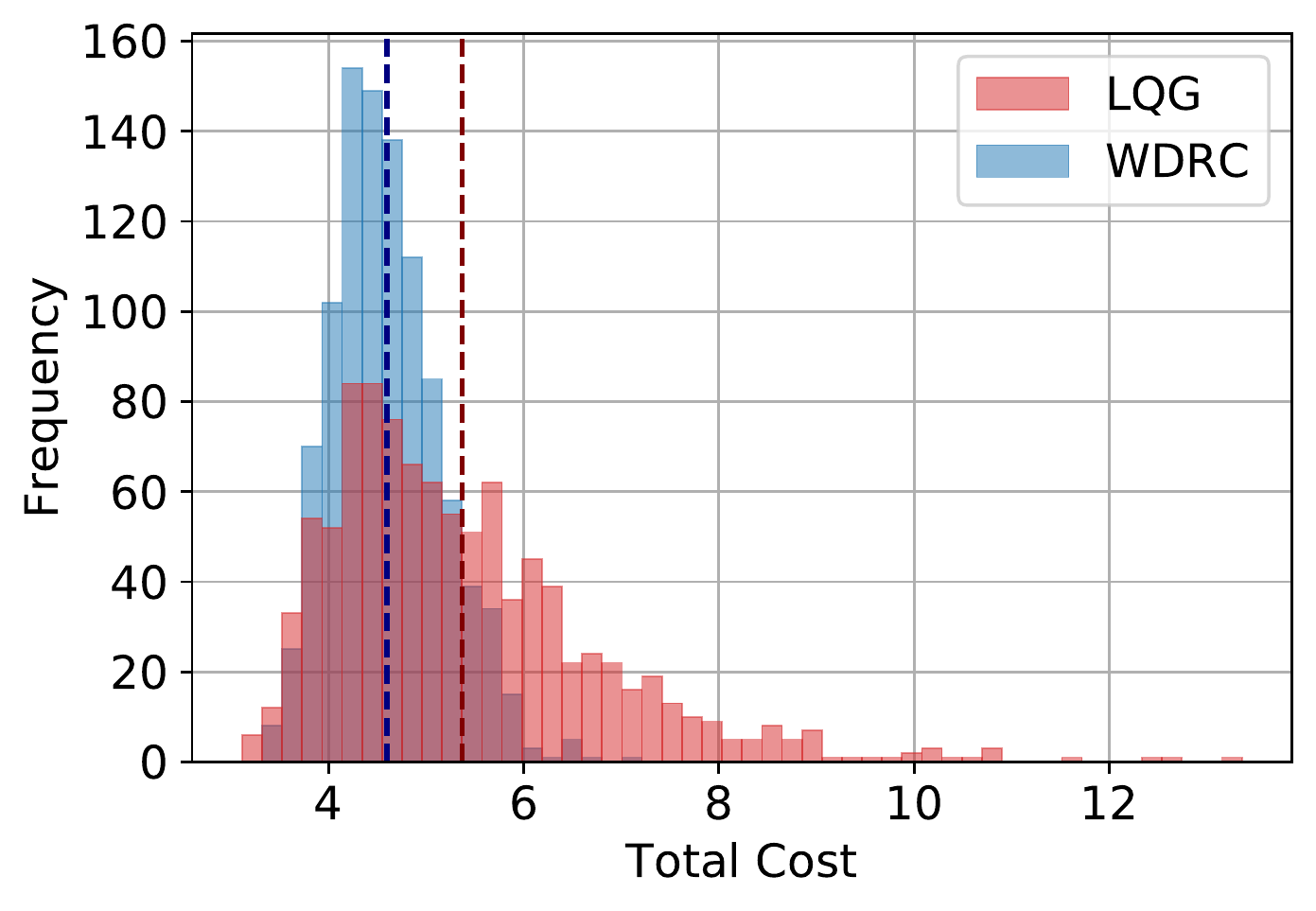}
    \caption{Histogram of the total costs in the case of Gaussian disturbances. The dashed lines represent the sample means of the costs returned by the two methods.}
    \label{fig:hist_g}
\end{figure}

\begin{table}[t]
\caption{Total costs for the Gaussian and the Uniform disturbance cases.}
\centering
\setlength{\tabcolsep}{0.2em} 
\begin{tabular}{>{\raggedright}c| >{\centering}m{1.6cm} | >{\centering}m{1.4cm} | >{\centering}m{1.4cm} | >{\centering\arraybackslash}m{1.4cm}}
\hline
\multirow{2}{*}{\textbf{Algorithm}} & \multicolumn{2}{c|}{Gaussian} & \multicolumn{2}{c}{Uniform}\\
\cline{2-5}
& Mean & Std. Dev. & Mean & Std. Dev.\\
\hline
\textbf{WDRC} & 4.599 & 0.557 & 0.536 & 0.151\\
\cline{1-5}
\textbf{LQG} & 5.374 & 1.398 & 0.781 & 0.267\\\hline
\end{tabular}
\label{table:costs}
\end{table}

In the first scenario, the true disturbance distribution is chosen as $\mathcal{N}([0.01, 0.02]^\top, [0.01,\, 0.005; 0.005,\, 0.01])$, and the disturbance data $\tilde{w}_t^{(i)}$ are sampled from this distribution.
The observation noise $v_t$ follows 
a zero-mean Gaussian distribution with covariance $M=0.2 I$, and
the initial state is assumed to be distributed according to $f_x(x_0) = \mathcal{N}([-1, -1]^\top, 0.001 I)$.
 The penalty parameter was calibrated by minimizing the upper-bound in Proposition~\ref{prop:gc} for $\theta = 0.1$ so that it satisfies Assumption~\ref{ass:lambda_ass}.


Fig.~\ref{fig:hist_g} shows  the distributions of the total costs over 1,000 simulations as a histogram.
Overall, the cost distribution for the WDR controller has a bell shape, and thus
  is more favorable than that for the LQG controller.
The WDR controller returns lower costs with a higher probability compared to the LQG controller. 
This is explained by the fact that the WDR controller anticipates mismatches between the true disturbance distribution and the nominal one. Meanwhile, LQG is unable to deal with such unexpected distribution errors, causing higher total costs with a right-skewed distribution. In addition, the WDR controller is less sensitive to the state estimates $\bar{x}_t$, unlike LQG, which relies solely on the inaccurate nominal distribution at both the control and estimation stages. 

The total costs for both WDR and LQG methods are reported in Table~\ref{table:costs}. 
The WDR controller incurs a lower average total cost with a smaller standard deviation compared to the LQG method, confirming the superiority of our method.

\subsection{Non-Gaussian Disturbances}

\begin{figure}[t]
    \centering
    \centering
    \includegraphics[height=2in]{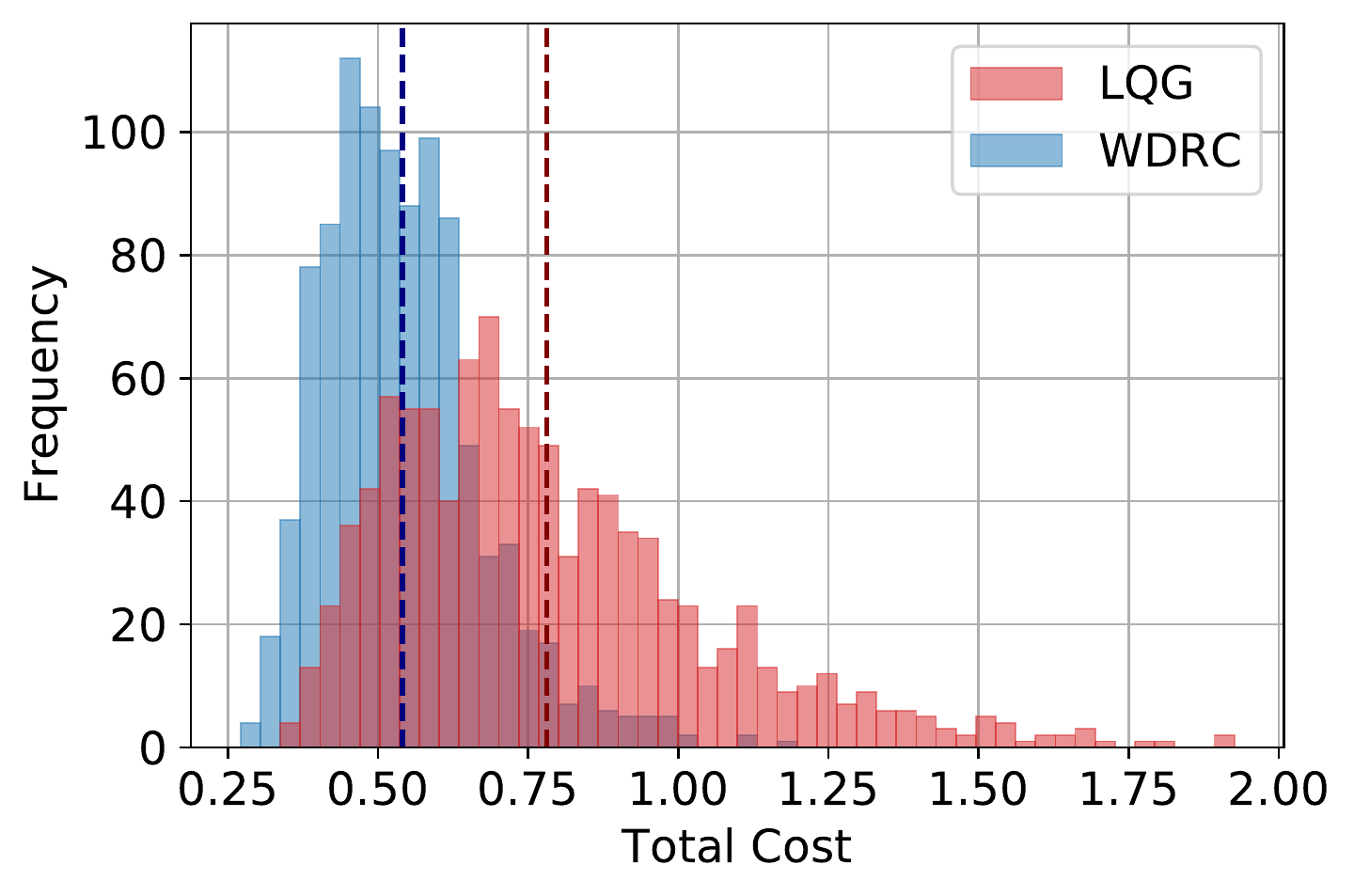}
    \caption{Histogram of the total costs in the case of uniform disturbances. The dashed lines represent the sample means of the costs returned by the two methods.}
    \label{fig:hist_u}
\end{figure}

In the second scenario, the true disturbance distribution is assumed to be uniform, $\mathcal{U}[-0.05, 0.05]^{2}$, and the disturbance data $\tilde{w}_t^{(i)}$ are sampled from this distribution.
The observation noise $v_t$ is drawn from a zero-mean Gaussian distribution with covariance $M =  0.1I$.
The initial state is uniformly distributed with $f_x(x_0) = \mathcal{U}([0.1, 0.2]^\top,[0.3, 0.5]^\top)$.
Since the state distribution is not Gaussian in this setting, the Kalman filter is no longer an optimal estimator. Yet, we apply the Kalman filter with
the Gaussian nominal distribution of disturbances
to demonstrate the capability of the WDR controller to compensate for an inexact state estimator. 
The penalty parameter was tuned for $\theta=0.03$ following  the same procedure as in the Gaussian case.


Fig.~\ref{fig:hist_u} illustrates the cost histograms over 1,000 simulation runs. The effect of the penalty term is more pronounced here, as the difference between the cost distributions is larger compared to the Gaussian case. 
 In particular, the total costs incurred by the WDR controller are concentrated in the low-cost regions, while those incurred by LQG are spread wider, with a right tail in the high-cost region.

Table~\ref{table:costs} summarizes the total costs for both WDR and LQG methods. Analogous to the Gaussian case, the average total cost incurred by the WDR controller is significantly lower than that obtained using LQG. Moreover, the standard deviation of the costs is considerably smaller when using the WDR controller.
 The reason for this result is twofold. First, the nominal distribution is not an efficient estimator of the true uniform distribution;   therefore, relying on moment estimates is insufficient. The WDR controller alleviates this issue by considering all distributions close to the nominal one, thereby enabling the system to effectively handle the distribution mismatch. 
 Second, the state estimation for LQG is performed for Gaussian disturbances with a nominal mean and covariance, while the WDRC method uses the worst-case distribution in the state estimation, adding  additional robustness to the estimation stage.
%

%

\section{Conclusions and Future Work}

In this work, we have presented a novel WDRC method for 
 discrete-time partially observable linear systems. 
 We proposed a  tractable reformulation of the original WDRC problem by recasting it as a  minimax control problem with an approximate Wasserstein penalty. 
 Furthermore, we derived a closed-form expression of the optimal control policy for the approximate problem with corresponding Riccati equations. Regarding the worst-case disturbance distribution, we obtain the closed-form solution for its mean and formulate a tractable SDP problem for its covariance matrix.
 The proposed method has several salient features, such as robustness to inaccurate distribution information about the disturbances and a provable bound for the optimality gap. 
 The experiment results demonstrate the capability of our method to hedge against distributional uncertainties. 

In the future, we aim to extend our results to the infinite-horizon setting and examine closed-loop stability. 
Moreover, the proposed WDRC method can be used in conjunction with robust state estimators to explicitly account for distributional errors in the sensor noise as well. 

\section*{Acknowledgement}

We thank the anonymous reviewers for their insightful comments.

\appendices
\section{Proofs}
\subsection{Proof of Lemma~\ref{lem:opt_cont}}\label{app:opt_cont}
\begin{proof}
It follows from~\eqref{vf} and~\eqref{vf_next} that
\[
\begin{split}
&V_t(I_t)  = \min_{u_t} \max_{\bar{w}_t, \Sigma_t\succeq 0}  \mathbb{E}_{x_t}[x_t^\top Q x_t \mid I_t] + \mathbb{E}_{x_{t+1}} [x_{t+1}^\top P_{t+1} x_{t+1} \\
& + 2 r_{t+1}^\top x_{t+1}\mid I_{t}, u_{t}] + \mathbb{E}_{x_{t+1}, y_{t+1}}[\xi_{t+1}^\top S_{t+1} \xi_{t+1} \mid I_{t}, u_{t}] \\
&+u_t^\top R u_t+ z_{t+1}+ \sum_{s=t+1}^{T-1}\tilde{z}_{s}- \lambda[\|\bar{w}_t - \hat{w}_t\|^2 + \mathrm{B}^2(\Sigma_t, \hat{\Sigma}_t)].
\end{split}
\]
We note that
\[
\mathbb{E}_{x_{t+1}, y_{t+1}}[\xi_{t+1}^\top S_{t+1} \xi_{t+1} \mid I_{t}, u_{t}] = \mathrm{Tr}[S_{t+1} \bar{P}_{t+1}],
\]
which is independent of $u_t$ and $\bar{w}_t$.
Therefore,
\[
\begin{split}
&V_t( I_t)  =\min_{u_t} \max_{\bar{w}_t, \Sigma_t\succeq 0} \mathbb{E}_{x_t}[x_t^\top Q x_t \mid I_t] + u_t^\top R u_t\\
&+ \mathbb{E}_{x_t}[(Ax_t + B u_t + \bar{w}_t)^\top P_{t+1}(Ax_t + B u_t + \bar{w}_t) \mid I_{t}, u_{t}] \\
&+ 2 r_{t+1}^\top (A\bar{x}_t + B u_t + \bar{w}_t) + z_{t+1} + \sum_{s=t+1}^{T-1}\tilde{z}_{s}\\
& - \lambda[\|\bar{w}_t - \hat{w}_t\|^2 + \mathrm{B}^2(\Sigma_t, \hat{\Sigma})]\\
& + \mathrm{Tr}[P_{t+1} \Sigma_t] + \mathrm{Tr}[S_{t+1}\bar{P}_{t+1}],
\end{split}
\]
where we used $\mathbb{E}[w_t^\top P_{t+1} w_t] = \bar{w}_t^\top P_{t+1} \bar{w}_t + \mathrm{Tr}[P_{t+1}\Sigma_t]$.

Note that the maximization problem with respect to $\bar{w}_t$ is now separated from the one with respect to $\Sigma_t$, enabling to solve each problem independently.  Differentiating the objective function with respect to $\bar{w}_t$, the first-order optimality condition yields an optimal solution $\bar{w}_t^*$ satisfying
\[
(\lambda I - P_{t+1})\bar{w}_t^* = P_{t+1}[A_t \bar{x}_t + B_t u_t] + r_{t+1} + \lambda\hat{w}_t.
\]
It follows from the assumption on penalty parameter that  the Hessian of value function with respect to $\bar{w}_t$ is negative definite. Therefore, the objective is strictly concave and has a unique maximizer~\eqref{mu_st}.


On the other hand, by grouping terms depending on the covariance $\Sigma_t$, it is observed that the worst-case covariance $\Sigma_t^* \succeq 0$  maximizes
\begin{equation}\label{cov_obj}
\mathrm{Tr}[S_{t+1} \bar{P}_{t+1}] + \mathrm{Tr}[(P_{t+1} - \lambda I) \Sigma_t] + 2 \lambda \mathrm{Tr}[(\Sigma_t^{1/2}\hat{\Sigma}_t\Sigma_t^{1/2})^{1/2}].
\end{equation}
Note that $\Sigma_t^*$ and \eqref{cov_obj} are independent of the control input.
By differentiating the objective function of the outer minimization problem
with respect to $u_t$,  we obtain
\begin{equation}\label{opt_cond}
\begin{split}
& 2\left[B + \frac{\partial \bar{w}^*_t}{\partial u_t}\right]^\top [P_{t+1}(A\bar{x}_t + B u_t + \bar{w}_t^*) + r_{t+1}] \\
&- 2\lambda \frac{\partial\bar{w}_t^*}{\partial u_t}(\bar{w}_t^* - \hat{w}_t) + 2R u_t = 2 B^\top g_t(u_t) + 2 R u_t
\end{split}
\end{equation}
where
\begin{equation}\label{g_u}
g_t(u_t) := P_{t+1}(A\bar{x}_t + B u_t + \bar{w}_t^*) + r_{t+1}.
\end{equation}
Under the assumption on the penalty parameter $\lambda$, the Hessian of the objective function with respect to $u_t$ is positive definite.
Thus, the outer minimization problem has the following unique solution:
\begin{equation}\label{cont}
    u_t^* = - R^{-1} B^\top g_t^*,
\end{equation}
where $g^* = g^*(u^*)$. In order to solve~\eqref{cont}, we rewrite the  mean of the worst-case disturbance distribution as
\[
\bar{w}_t^* = \frac{1}{\lambda}\big(P_{t+1}(A\bar{x}_t + Bu_t^* +\bar{w}_t^*) + r_{t+1} + \lambda\hat{w}_t\big){\color{red}.}
\]
Then, substituting it into~\eqref{g_u}, we obtain
\[
g_t^* = (I + P_{t+1}\Phi)^{-1} (P_{t+1} A\bar{x}_t + P_{t+1} \hat{w}_t + r_{t+1}).
\]
We conclude the proof by replacing the above expression into~\eqref{cont}.
\end{proof}

\subsection{Proof of Theorem~\ref{thm:sol}}\label{app:sol}
\begin{proof}

The theorem can be proved via mathematical induction. At $t=T$, the statement holds as $V_T(I_T) = \mathbb{E}_{x_T}[x_T^\top P_T x_T | I_T]$. Now suppose that the induction hypothesis is true for $t+1$. Then, Lemma~\ref{lem:opt_cont} provides that at time $t$, the inner maximization with respect to the disturbance mean attains an optimal solution $\bar{w}_{t}^*$ computed by~\eqref{mu_st}, while the outer problem is minimized by $u_t^*$ calculated according to~\eqref{u_opt}. Plugging these values into $V_t(I_t)$, we have
\[
\begin{split}
V_t(I_t)  = \, &\mathbb{E}_{x_t}[x_t^\top (Q + A^\top P_{t+1} A) x_t \mid I_t]\\
& - \bar{x}_{t}^\top S_t \bar{x}_t + 2r_t^\top \bar{x}_t + z_{t} + \tilde{z}_t + \sum_{s=t+1}^{T-1}\tilde{z}_{s},
\end{split}
\]
where $S_t = A^\top P_{t+1} \Phi(I+P_{t+1}\Phi)^{-1} P_{t+1} A$, while $r_t$ and $z_t$ are defined in~\eqref{r} and~\eqref{z}, and $\tilde{z}_{t}$ is the optimal value of~\eqref{z_tilde}.
Simplifying the expressions, we have
\[
\begin{split}
V_t(I_t)  = \, &  \mathbb{E}_{x_t}[x_t^\top (Q + A^\top P_{t+1} A - S_{t}) x_t\mid I_t]\\
&+ \mathbb{E}_{x_t}  [\xi_t^\top S_t\xi_t + 2r_t^\top x_t\mid I_t]  + z_{t} + \sum_{s=t}^{T-1}\tilde{z}_{s}\\
= \, &\mathbb{E}_{x_t}[x_t^\top P_t x_t + \xi_t^\top S_{t}\xi_t + 2 r_t^\top x_t \mid I_t] + z_{t} + \sum_{s=t}^{T-1}\tilde{z}_{s},
\end{split}
\]
which is in the required quadratic form with parameters~\eqref{P}--\eqref{z}. This completes our inductive argument, and the result follows. 
\end{proof}

\bibliographystyle{IEEEtran}
\bibliography{reference} 

\end{document}